\newtheorem{proposition}{Proposition}
\newtheorem{remark}{Remark}
\newenvironment{proof}[1][Proof]{\noindent\textbf{#1.} }{\ \rule{0.5em}{0.5em}}
\newcolumntype{L}[1]{>{\raggedright\let\newline\\arraybackslash\hspace{0pt}}m{#1}}
\newcolumntype{C}[1]{>{\centering\let\newline\\arraybackslash\hspace{0pt}}m{#1}}
\newcolumntype{R}[1]{>{\raggedleft\let\newline\\arraybackslash\hspace{0pt}}m{#1}}
\title{A Solomonic Solution to Ownership Disputes: An Application to Blockchain Front-Running}
\author{Joshua S. Gans and Richard Holden*}
\begin{document}

\maketitle

\begin{abstract}
    Blockchain front-running involves multiple agents, other than the legitimate agent, claiming a payment from performing a contract. It arises because of the public nature of blockchain transactions and potential network congestion. This paper notes that disputes over payments are similar to classic ownership disputes (such as King Solomon's dilemma). We propose a simultaneous report mechanism that resolves Solomon's dilemma (using only ordinal preference information) and also eliminates blockchain front-running. In each case, the mechanism relies on threats to remove ownership from all claimants and preferences from the legitimate claimant over allocations to other agents.    
    
    \textit{Keywords}: subgame perfect implementation, blockchain, front-running, mechanism design, ownership
    
\end{abstract}

\vfill
\begin{footnotesize}
\noindent * (Gans) Rotman School of Management, University of Toronto and NBER; (Holden) Department of Economics, UNSW Business School. 

\noindent** All correspondence to joshua.gans@utoronto.ca. The latest version of this paper is available at \url{https://www.joshuagans.com/} and \url{https://github.com/solomonic-mechanism}. Thanks to Ethan Buchman, Scott Kominers and DJ Thornton for useful discussions and Raphael Mu for excellent research assistance. 
\end{footnotesize}
\newpage

\section{Introduction}

Front-running has become a serious issue for smart contracts in blockchain ecosystems; threatening to completely undermine its potential.\footnote{See \cite{catalini2020some}, \cite{gans2019fine} and \cite{holden_malani_2021} for overviews of the economic potential of the blockchain to solve contracting issues.} The problem is straightforward. When a contract is placed on a blockchain such as Ethereum, there is a performance obligation on one party that, when achieved, triggers a payment in tokens from another party. Sometimes these contracts are open offers -- such as a bounty or reward. When performance occurs, the intended payee sends a message to the payor that is akin to an invoice for payment together with evidence that the obligation was met. Being the blockchain, these messages are public prior to being committed to a block. Also, as there is potential congestion on the network, a message is sent with a delay depending upon the transaction fee nominated by the payee. In the intervening time, front-runners, or bots programmed to front-run, see the message and can resend it, substituting in their own address for payment and a higher transaction fee to achieve priority (\cite{daian2019flash}, \cite{eskandari2019sok}). The payor's account for that contract is then drained of tokens before the intended payee can be paid.\footnote{The total value of tokens gained in this manner is estimated at almost \$1 billion since January 2020 (\url{https://explore.flashbots.net/}) although the vast majority of that is from arbitrage front-running rather than liquidition front-running which is the focus of this paper. See also, \cite{ferreira2021frontrunner}.}  

While such front-running is akin to the leap-frogging activities of high frequency traders,\footnote{This occurs were a large trade is placed and bots are able to trade in the market ahead of that trade and exploit arbitrage opportunities. This happens for cryptocurrencies on the blockchain as well using a technique called `insertion' to front-run high value transactions; \cite{ferreira2021frontrunner}. However, this is not the type of front-running considered in this paper.} in this case, it threatens to undermine the ability to offer smart contracts on any blockchain system.\footnote{See \cite{robinson2020ethereum}. The problem was first identified in 2014 in a Reddit post from \href{https://www.reddit.com/r/ethereum/comments/2d84yv/miners_frontrunning/}{pmcgoohan}; see \cite{stankovic2021} for an overview. It is also possible that this activity could undermine the consensus layer of blockchains through front-running on past blocks using a time bandit attack; \cite{daian2019flash}.} Fearing non-payment, a contract payee may not perform or enter into a contract at all. This harms both parties and will likely stifle the development of smart contracts and the ensuing gains from trade. While some solutions involving encrypting messages have been posited, these can only potentially assist in some bilateral contracts between known and identifiable parties (\cite{copeland2021}) unless implemented at a platform level. Other solutions involving increasing the transparency of ``front-running" races do not actually resolve the problem and merely place the payee on a more level playing field than front-running bots.\footnote{\url{https://ethresear.ch/t/flashbots-frontrunning-the-mev-crisis/8251} and the critique by Ed Felten \url{https://medium.com/offchainlabs/meva-what-is-it-good-for-de8a96c0e67c} Such auctions may also reduce the congestion effects generated by front-running; \href{https://medium.com/@VitalikButerin/i-feel-like-this-post-is-addressing-an-argument-that-isnt-the-actual-argument-that-mev-auction-b3c5e8fc1021}{Buterin response}.}

In this paper, we provide and examine a mechanism designed to resolve ownership disputes that fall into a specific class; of which front-running of the type describe here is an example. Another famous example is the biblical dispute heard by King Solomon. The class of disputes have the following characteristics:
\begin{enumerate}
    \item The legitimate claimant is part of the set of agents making an ownership claim.
    \item Legitimate and illegitimate claimants know if they are legitimate or not.
    \item Legitimate and illegitimate claimants have different preferences over who, other than themselves, are allocated ownership.
\end{enumerate}
In the case of Solomon's adjudication over who was the true mother of a baby, it was known that the true mother was one of the set of two claimants, each claimant knew their own status and, as we will discuss, it was a feature of the story that the true mother had different preferences than the other agent over what happened to the baby if ownership was not allocated to them. For blockchain front-running, the nature of the problem necessitates the legitimate claimant being part of the relevant claimant set, claimants knowing their own status and illegitimate claimants being indifferent was to other outcomes which may not be the case for the legitimate claimant.

The mechanism we deploy is a simple special case of the Simultaneous Report (SR) mechanism developed by \cite{chen2018getting} that itself is a simplification of mechanisms explored by \cite{moore1988subgame} and \cite{moore1992implementation}.\footnote{It is, however, distinct from the divided ownership processes examined by \cite{ayres1994solomonic} as it envisages a solution outcome whereby ownership is not divided.}

This paper proceeds as follows. In the next section, we revisit King Solomon's dilemma as a warm-up exercise but, in the process, show how the SR mechanism provides a more robust solution with attractive solutions compared to everything else proposed over the last three millennia. Section 3 then sets up the front-running problem and provides a mechanism (a Solomonic clause), implementable on blockchains, that resolves it entirely. Section 4 concludes.

\section{King Solomon's Baby-Ownership Dispute}

The story of Solomon's dilemma comes the First Book of Kings, Chapter 3, beginning at the 16th verse, and goes as follows.

\begin{quote}

Then came there two women, that were harlots, unto the king, and stood before him. And the one woman said: 'Oh, my lord, I and this woman dwell in one house; and I was delivered of a child with her in the house.
 And it came to pass the third day after I was delivered, that this woman was delivered also; and we were together; there was no stranger with us in the house, save we two in the house. And this woman's child died in the night; because she overlay it. And she arose at midnight, and took my son from beside me, while thy handmaid slept, and laid it in her bosom, and laid her dead child in my bosom. And when I rose in the morning to give my child suck, behold, it was dead; but when I had looked well at it in the morning, behold, it was not my son, whom I did bear.' And the other woman said: `Nay; but the living is my son, and the dead is thy son.' And this said: `No; but the dead is thy son, and the living is my son.' Thus they spoke before the king.
 
 Then said the king: `The one saith: This is my son that liveth, and thy son is the dead; and the other saith: Nay; but thy son is the dead, and my son is the living.' And the king said: `Fetch me a sword.' And they brought a sword before the king. And the king said: 'Divide the living child in two, and give half to the one, and half to the other.'
 
 Then spoke the woman whose the living child was unto the king, for her heart yearned upon her son, and she said: `Oh, my lord, give her the living child, and in no wise slay it.' But the other said: `It shall be neither mine nor thine; divide it.' Then the king answered and said: `Give her the living child, and in no wise slay it: she is the mother thereof.' And all Israel heard of the judgment which the king had judged; and they feared the king; for they saw that the wisdom of God was in him, to do justice.
\end{quote}

\noindent Game theorists have considered many mechanisms to solve the general problem inspired by this biblical account. The literature has focused on mechanisms whereby challenge stages involve bids or second-price auctions (see \cite{glazer1989efficient}, \cite{perry1999general}, \cite{olszewski2003simple}, \cite{qin2009make}, \cite{mihara2012second}). This means that they rely on an assumption that the true mother has a monetary equivalent value (or cardinal utility) greater than the other woman. As observed by \cite{guha2014reinterpreting}, this may not be a reasonable assumption and there is no biblical reference that might lead to that fact. It may well be that the other woman, distraught or misguided, wants the baby more. In any case, these models are often paired with an assumption that this utility is quasi-linear with no wealth effects (\cite{moore1992implementation}) which again stretches their credibility as a solution to King Solomon's dilemma. 

Instead, we propose a mechanism that reveals truthful outcomes while also not imposing additional costs on the true mother, resolving the dispute and requiring the mechanism designer to have only certain ordinal beliefs regarding preferences of each agent and not being able to compare utilities directly. Moreover, the mechanism we offer does not require a non-credible threat (such as killing an innocent child). Our focus instead is on the preference of the other woman who said in Kings, ``It shall be neither mine nor thine; divide it." We interpret this as an indifference relation as to what happens if the baby is not allocated to her. 

\subsection{Model Setup}

There are two women -- Anna ($a$) and Bess ($b$). There are two potential states, $S$, of the world $\{\alpha, \beta\}$ where, under state $\alpha$, $a$ is the true mother and, under state $\beta$, $b$ is the true mother. Both agents, $a$ and $b$, know the true state of the world which is hidden information to everyone else. 

There are three potential outcomes:
\begin{enumerate}
    \item ($A$) Allocate the baby to $a$;
    \item ($B$) Allocate the baby to $b$;
    \item ($C$) Allocate the baby to a third party;
\end{enumerate}
In the biblical account, the third outcome was to kill the baby. \cite{moore1992implementation} adds a fourth outcome where the baby and both agents are killed. We do not consider any fatal outcomes here so that everyone lives although we will have an option to add a punishment of arbitrary size to either $a$ or $b$. 

The mechanism designer possesses the following information regarding agent preferences:
\begin{itemize}
    \item Each agent strictly prefers any given outcome without a punishment to one where they are punished.
    \item Each agent strictly prefers the outcome where they are allocated the baby to an outcome where they are not allocated the baby.
    \item The true mother strictly prefers the baby to be allocated to a third party rather than the other agent.
    \item The other agent is indifferent as to whether the baby is allocated to the true agent or a third party.
\end{itemize}
This third condition on preferences means that the true mother's preferences are dependent on the identity of who is allocated the baby beyond themselves. The rationale here is that the true mother has a preference for their baby to be raised better and, due to the dispute with an agent they know was willing to lie, would prefer $C$ to the allocation of the baby to that agent. Thus, if the state is $\alpha$, then $a$ has the following preference relation: $A \succ C \succ B$ while $b$ has a preference that $B \succ A \sim C$. This last preference relation is inspired by the biblical account that the other woman did not care whether the baby was killed or not. Finally, for any outcome $X \in \{A,B,C\}$ that is paired with a fine, $-F$, we assume that $X \succ_a (X,-F)$ and $X \succ_b (X,-F)$ for all $F>0$. 

\subsection{Proposed Mechanism}

The mechanism we propose is a special case of the Simultaneous Report (SR) Mechanism developed by \cite{chen2018getting}.\footnote{The SR mechanism is a simplification of the multi-stage mechanisms explored by \cite{moore1988subgame}.} Consider the following mechanism:
\begin{enumerate}
    \item One agent is randomly chosen to be the \textit{proposer}, $p$, and the other, $r$, is chosen to be the responder.
    \item The proposer and responder make claims, $M_p, M_r \in \{\alpha,\beta\}$, respectively
    \item If $M_p = M_r$, then the outcome is $A$ if $M_r = \alpha$ and $B$ if $M_r = \beta$.
    \item If $M_p \neq M_r$, then the challenge stage begins with both $p$ and $r$ being fined, $F > 0$. 
\end{enumerate}

\noindent The \textbf{challenge stage} involves:
\begin{enumerate}
    \item $p$ send a new message $M^C_p$ based on knowledge that there is a disagreement.
    \item If $M^C_p = M_r$, then the outcome is $A$ if $M_r = \alpha$ and $B$ if $M_r = \beta$ and $r$ is refunded $F$.
    \item If $M^C_p \neq M_r$, then $C$ is implemented.
\end{enumerate}
Given this, we can prove the following:
\begin{proposition}
The unique subgame perfect equilibrium outcome for the mechanism is $M_p=M_r=S$.
\end{proposition}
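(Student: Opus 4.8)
The plan is to solve the two-stage game by backward induction, first pinning down behavior in every challenge subgame and then folding those continuation outcomes back into the simultaneous first stage. Throughout I would fix the true state to be $\alpha$ (so $a$ is the true mother and $b$ the liar); the case $\beta$ follows by relabeling $a \leftrightarrow b$, $\alpha \leftrightarrow \beta$, and $A \leftrightarrow B$. Because the proposer is assigned at random, I also have to run the argument in both assignments, $p = a$ and $p = b$.

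First I would analyze the challenge stage. The key observation is that the proposer $p$ has already been fined $F$ before choosing $M_p^C$, and conceding ($M_p^C = M_r$) refunds only the responder, not $p$ himself. Hence $p$'s decision is driven purely by the comparison between the allocation named by $M_r$ and the allocation $C$, with the fine held fixed. Using the preference lists I would record, for each pair $(p, M_r)$, whether $p$ concedes: the true mother always concedes when $M_r$ names her (she prefers her own allocation to $C$) and never concedes when $M_r$ names the liar (she prefers $C$ to the liar receiving the baby, since $C \succ_a B$); the liar concedes when $M_r$ names the liar but is \emph{exactly indifferent} when $M_r$ names the true mother, because $A \sim_b C$. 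I would flag that this single indifference generates multiple equilibrium continuations off path but, as the next step shows, never changes the realized outcome.

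Next I would collapse the game to the $2\times 2$ first-stage normal form, entering into each cell the continuation outcome and fines computed above, and show that the true mother has a strictly dominant strategy, namely to report truthfully. Against a truthful responder she secures her own allocation rather than a fined version of it; against a lying responder, truth-telling triggers the challenge stage and yields $C$ (with a fine), whereas lying would hand the baby to the liar. The main obstacle is precisely this second comparison: I need $(C,-F) \succ_a B$, i.e. the true mother still prefers a third party to the liar even after paying the fine. This does not follow from ``$X \succ_a (X,-F)$'' and ``$C \succ_a B$'' in isolation, so I would invoke the intended reading that the baby-allocation ranking is primary and the fine only a secondary, tie-breaking consideration (equivalently, that the designer sets $F$ small enough to preserve the allocation ranking). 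It is exactly this comparison that rules out the mutual-lie profile $(\beta,\beta)$, which would implement $B$; without it that profile survives as a second equilibrium and uniqueness fails.

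Finally, with truthful reporting established as strictly dominant for the true mother, I would close by best response for the liar: given a truthful opponent, the liar strictly prefers to agree truthfully, since disagreement now carries a fine and yields an allocation she values no more than $A$ (formally $A \succ_b (C,-F)$ and $A \succ_b (A,-F)$ in state $\alpha$). Strict dominance forces the true mother to report truthfully with probability one in any pure or mixed equilibrium, and the liar's strict best response then pins down $M_p = M_r = \alpha$, so the outcome is the unique implementation of $A$ and the off-path challenge-stage indifference is irrelevant. Invoking the symmetry noted at the outset for state $\beta$ completes the proof that the unique subgame perfect equilibrium outcome is $M_p = M_r = S$.
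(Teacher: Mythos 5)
Your proof is correct and follows essentially the same route as the paper's: backward induction through the challenge stage (where the true mother asserts against a lie and the liar is indifferent when facing the truth), followed by a dominance argument for the true mother and a strict best response by the liar in the first stage, run separately for each assignment of the proposer role. The one assumption you flag as needed beyond the stated ordinal rankings, $(C,-F)\succ_a B$, is exactly what the paper invokes with its ``for $F$ sufficiently small'' clause, so your treatment matches the paper's while being somewhat more explicit about the off-path indifference and about uniqueness of the outcome rather than of the equilibrium.
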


\begin{proof}
Without loss in generality, suppose that $S = \alpha$. Consider a challenge stage that has arisen. There are two cases to consider:
\begin{enumerate}
    \item ($p = a$) Assume that $M_r = \beta$. If $M^C_p = \beta$, then $a$ receives, net of any fine, a payoff associated with $B$. If $M^C_p = \alpha$, then $a$ receives, net of any fine, a payoff associated with $C$. Since $a$ is the true mother, $C \succ B$ and so $M^C_p = \alpha$. Moving to the first stage, $b$, as responder, obtains the payoff associated with $C$ plus a fine if $M_r = \beta$ and receives the payoff associated with $A$ if $M_r = \alpha$. So long as $A \succ_b (C,-F)$, $b$ sets $M_r = \alpha$. In the first stage, it is easy to see that $a$ has a weakly dominant strategy to set $M_p = \alpha$ as they either receive the payoff associated with $A$ or $(C,-F)$ which, for $F$ sufficiently small, $a$ prefers to $B$ which could arise if $M_p = \beta$.
    \item ($p = b$) Assume that $M_r = \alpha$. If $M^C_p = \alpha$, then $b$ receives, net of any fine, a payoff associated with $A$. If $M^C_p = \beta$, then $b$ receives, net of any fine, a payoff associated with $C$. Since $a$ is the true mother, $C \sim_b A$ and so that $b$ could choose $M^C_p = \alpha$ or $M^C_p = \beta$ and receive the same utility if the latter results in outcome $C$. Moving to the first stage, $a$, as responder, obtains the payoff associated with $C$ plus a fine if $M_r = \alpha$ and receives the payoff associated with $B$ if $M_r = \beta$. So long as $(C,-F) \succ_a B$, $a$ sets $M_r = \alpha$. In the first stage, it is easy to see that $b$ has a weakly dominant strategy to set $M_p = \alpha$ as they either receive the payoff associated with $A$ which $b$ prefers to $(C,-F) \big(\sim_b (A,-F)\big)$ which could arise if $M_p = \beta$. 
\end{enumerate}
Thus, the mechanism results in truthful revelation.
\end{proof}
\\

\noindent Note here that the fine, $F$, can be set arbitrarily small and still generate truthful revelation in the first stage without the need to actually impose the fine. This sets this mechanism apart from bidding mechanisms that rely on potentially sizeable fines or on mechanisms such as that proposed by \cite{moore1992implementation} that relies on a threat involving maximal utility loss. Both of these may not be credibly implemented even by a despotic, autocratic mechanism designer. 

Interestingly, it is the assumption on agents' rankings of outcomes other than their bliss outcomes that drives this result. The true mother, when faced with giving the baby to a third party or the other woman, will choose the third party. This prevents consensus between the women on the state and forces the other woman to claim the true outcome in order to avoid a fine.

Also important is the fact that, for the other woman, the outcome where the true mother receives the baby is strictly preferred to the third party outcome along with a fine. Our assumption that the other woman is indifferent as to the baby's allocation other than themselves and does not prefer outcomes with fines, generates this ranking but it is the ranking that drives the mechanism. If, for instance, the other woman held some malice towards the true mother (as is examined by \cite{guha2014reinterpreting}), then without a sufficiently large $F$, this preference relation may not hold and this mechanism would not result in truthful revelation.\footnote{\cite{guha2014reinterpreting} examines a mechanism that has a similar consensus followed by potential bad outcomes quality as ours proposed here. However, because of the need to impose large fines and sufficiently bad outcomes, this results in potential wealth effects and other issues leading back to mechanisms that impose assumptions on cardinal utilities. By contrast, under our assumptions here, neither large fines nor catastrophic outcomes are required and all assumptions are regarding preference orderings.}

\section{Blockchain Front-Running}

We now turn to apply a variant of the above mechanism to the ownership dispute inherent in blockchain front-running. It shares with the above analysis that the outcome from persistent disputes is that neither party receives ownership and that one party has strict preferences over ownership allocations they are not part of.

\subsection{The Blockchain Contracting Problem}

Our unit of analysis is a given contract. That contract comprises certain performance obligations whose performance can be verified by party $A$ sending a message $M_A = \{\alpha,E\}$ where $\alpha$ is $A$'s wallet address and $E$ is verifiable evidence of performance to the network as a transaction. That transaction is then confirmed to a block and recorded on the blockchain. At that point, any payment, $T$, triggered by the receipt of $M_A$ involves $T$ in tokens being transferred to $A$'s wallet. Note that any agent, $i$, sending a message that is confirmed to a block specifies and pays a transaction fee, $f_i > 0$.

Front-running arises when $B$ observes $M_A$ as it is broadcast to the network but before it is confirmed to a block and $B$ chooses to send a message $M_B = \{\beta,E\}$ to the network. If $M_B$ is confirmed to a block ahead of $M_A$, then $T$ is automatically sent to $B$'s wallet and $A$ receives no payment. This could arise if $M_B$ is confirmed to a block earlier than the block $M_A$ is confirmed on or if it is confirmed to the same block with an earlier order amongst transactions in that block.

Given that $M_A$ is broadcast first, how could $M_B$ be recorded on the blockchain with an earlier time-stamp? Note, first, the messages are initially broadcast to a mempool. Those transactions are then validated and confirmed by miners or validating nodes who are responsible for ordering the transactions.\footnote{Miners are responsible in proof-of-work protocols while validating nodes are responsible in proof-of-stake protocols.} All valid transactions are recorded on the blockchain at which point the transaction with the earliest timestamp will trigger the contracted actions. Miners or validating nodes will then choose the order of transactions. On the Ethereum blockchain, miners will try and maximise transaction fee revenue by prioritizing transaction recording based on the transaction fee bids (or offered `gas') that accompany a message. Thus, $M_B$ can, by offering to pay a higher transaction fee, be ordered ahead of $M_A$ in a block. Of note is the fact that, because miners have the power to order transactions, to the extent that transaction ordering matters, the ability to earn payments based on ordering power has been termed the \textit{miner-extractable value} (or MEV) (see \cite{daian2019flash}).\footnote{For a demonstration of a smart contracting being front-run in this manner see Scott Bigelow, \href{https://youtu.be/UZ-NNd6yjFM}{``How To Get Front-Run on Ethereum mainnet"}, \href{https://youtu.be/UZ-NNd6yjFM}{YouTube} June 17, 2020.}

It is useful to illustrate the severity of this issue for contracting. Suppose that a party contracts $A$ to provide a service using a contract recorded on the blockchain. If $A$ performs the contract, assume that it costs them, $c>0$, to do so and $T$ will be paid if evidence of performance is submitted. In the absence of front-running, $A$ sends a message, $M_A$ on the blockchain for a fee $f_A$ that can be arbitrarily small and ends up with a payoff of $T-c-f_A$ which is assumed to be positive. 

Suppose now that front-running can occur. If $B$ sends a message $M_B$ for a fee of $f_B$ they can potentially earn $T$. If $f_A = f_B$, then $B$'s expected payoff is $\frac{1}{2}T-f_B$ and $A$'s falls to $\frac{1}{2}T-c-f_A$. In effect, the payment to $A$ is taxed at 50 percent assuming there is only one front-runner. If there are more than one, the effective tax is higher.

This analysis, however, assumes that transaction fees are fixed. However, these are chosen by agents recording transactions on the blockchain. Typically, if $f_B > f_A$, $B$ would be recorded at an earlier time-stamp and their payoff becomes $T-f_B$ while $A$'s drops to $-c-f_A$. In reality, $A$ and $B$ choose their fees as part of a first-price, sealed bid, all-pay auction for priority. Here, $B$ will choose a fee up to $f_B = T$ requiring $A$ to exceed that to achieve priority; something that is not worthwhile. Having both post fees equal to $T$ is not an equilibrium outcome if priority is then randomly assigned. Instead, one sets a fee at $T$ while the other sets an arbitrarily low fee or does not choose to send a message. Given this, either $A$ sets $f_A = T$ and earns a payoff of $-c$ or it sets a low or zero fee and earns the same payoff. Under these conditions, $A$ chooses not to incur $c$ and perform the contract regardless of how high $T$ is.

Under these conditions, contracts that require settlement on the blockchain will not arise in equilibrium. Various solutions have been proposed to mitigate such issues. These have included auctions to make priority a more transparent process (\cite{daian2019flash}, \cite{buterin_2021}). However, these auctions, do not prevent front-runners from participating and that competition still immiserizes contract safety as outlined above. A second solution involves adjusting blockchain protocols to improve time-stamping. However, as there are always lags of some kind achieving this is difficult. A third set of solutions involves encrypting messages until they are confirmed on the blockchain (\cite{aune2018footprints}). This can resolve this problem but it requires implementation at the blockchain protocol level, encrypting all messages which is computationally expensive, and moves away from the public nature of blockchain interactions.   

Compared with contracting outside of the blockchain, the reason why such front-running is a threat is because there is no proof of identity required for payment. This is by design as the privacy of parties on the blockchain is a feature. Thus, blockchain smart-contract systems are characterised by contracts specify performance objectives but not the identity of those performing them. This allows anonymity in payments to be preserved. If the contract specified that following performance, payments would be made to $A$'s wallet ($\alpha$) specifically, front-running could not occur as $M_B$ would not trigger a payment to $\beta$. However, anonymity means that the addressee for payment can be substituted without altering the contract. Our examination here is made on the basis that this blockchain feature needs to be preserved. 

We make the following assumption with regard to agent preferences. The agent who has actually performed the contractual obligation and broadcasts a message of that performance earns $T$ in utility if they receive payment for that performance, $-\theta$ if that payment goes to an illegitimate claimant and $0$ otherwise. (This is analogous to the assumption that the third option, $C$, is preferred by the true claimant in the Solomon example.) 

\subsection{The need to discretize time}

In the literature on front-running in financial markets, one of the proposed solutions was to change time on an exchange from continuous to discrete time (\cite{budish2015high}). In order to operate a mechanism involving multiple agents, to resolve front-running on blockchains we must similarly discretize time. This is done by the smart contract proposing a time period counted from the time a first message $M_i$ is recorded on the blockchain during which all such messages are collected and the mechanism we propose is run on them. The length of the time period, let's call it $\Delta$, is a parameter that can be chosen.\footnote{$\Delta$ can be measured in time units or in blocks with the first block being the one where claim(s) are first confirmed.} Increasing $\Delta$ reduces the need for claimants to pay higher transaction fees so as to participate in the mechanism but also results in a delay in payment. If there is a single message received during $\Delta$, there is no ownership dispute over the payment and the payment is made to the wallet addressee. If there is more than one message received, there is dispute and the mechanism we propose is run to resolve the dispute immediately upon $\Delta$'s end.

\subsection{The single legitimate claimant case}

While our mechanism applies for an arbitrary number of ownership claimants to $T$, initially we assume that there are two claimants, $A$ (the legitimate claimant) and $B$ (the would-be front runner or illegitimate claimant). Each claimant knows their own status but this is known to the mechanism designer. The designer's goal is that the payment only be made to the true claimant. 

The process is initiated as soon as a claim is validated and confirmed on the blockchain. Consider the following mechanism:
\begin{enumerate}
    \item If, in a time period, $\Delta$, there is a single message, $\{i,E\}$, send $T$ to $i$.
    \item If, in a time period, $\Delta$, there are two messages, $M_A = \{\alpha,E\}$ and $M_B = \{\beta,E\}$, the challenge stage begins. (Figure \ref{fig:mempool} illustrates the process by which claims are assembled on a blockchain).
\end{enumerate}
\noindent The \textbf{challenge stage} involves:
\begin{enumerate}
    \item One agent is chosen at random and given the opportunity to withdraw their claim.
    \item If the claim is withdrawn, the other agent is paid $T$.
    \item If the claim is asserted, $T$ is paid to a third party (or, equivalently, the tokens burned) and the contract is nullified.
\end{enumerate}

\begin{figure}
    \centering
  \includegraphics[scale=0.38]{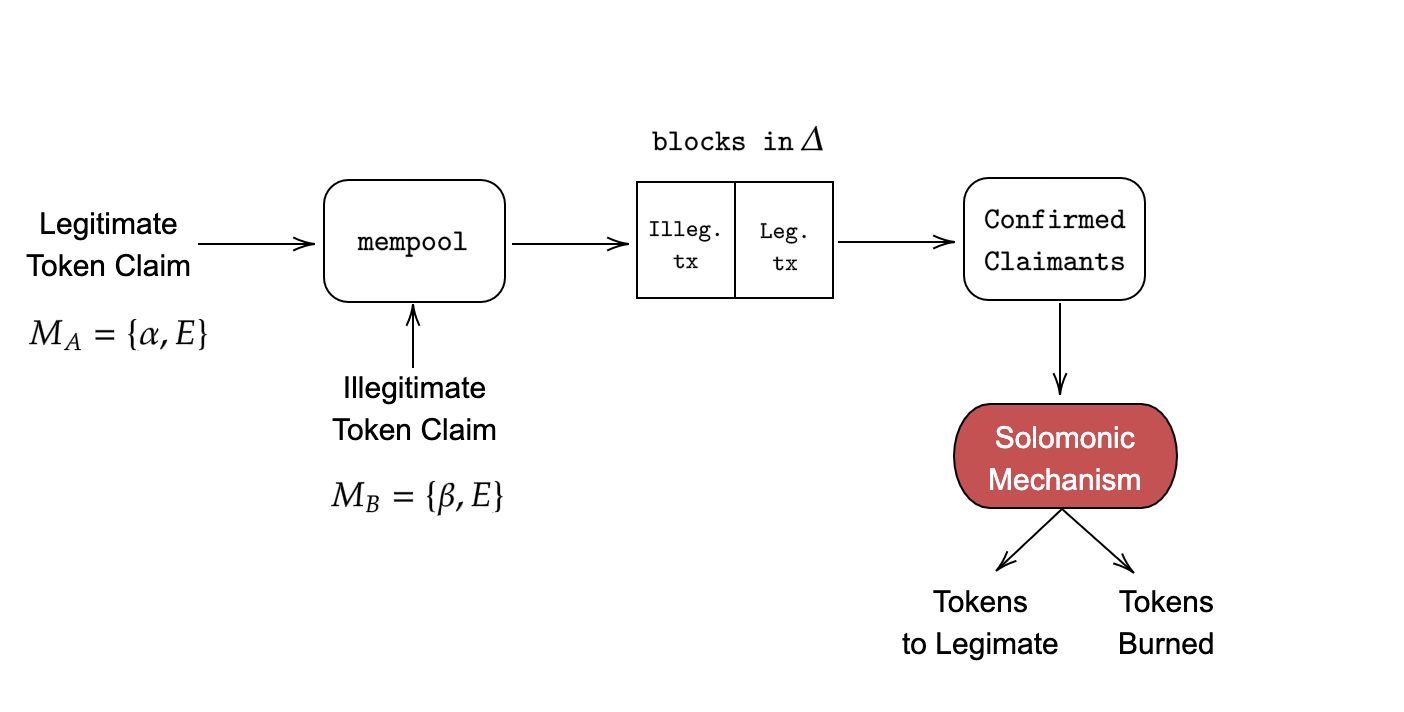}
    \caption{\textbf{Assembling Competing Claims}}
    \label{fig:mempool}
\end{figure}

\noindent Thus, as is depicted in Figure \ref{fig:mempool}, the legitimate claimant first broadcasts a message to the mempool where it can be seen by others triggering illegitimate claims. All claims pay the requisite fees and are confirmed to blocks in the specified time period, $\Delta$. The mechanism is then run drawing from confirmed claims. 

Given this, we can prove the following:

\begin{proposition}
Suppose that $\theta > 0$. The unique subgame perfect equilibrium involves a single claimant who is the agent who performs the obligation.
\end{proposition}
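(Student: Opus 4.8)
The plan is to solve the game by backward induction over its two phases: the \emph{entry phase}, in which $A$ (having performed) broadcasts $M_A$ and $B$ then decides whether to front-run by broadcasting $M_B$, and the \emph{challenge phase} that is triggered whenever both messages are confirmed within $\Delta$. I would record payoffs net of the transaction fees $f_A, f_B > 0$, which are sunk once a message is confirmed, and I would exploit the stated preference asymmetry: the legitimate performer $A$ values receiving $T$ at $T$, values the payment going to the illegitimate claimant at $-\theta$, and values the payment going to neither (the third party / burn) at $0$; whereas $B$, not having performed, obtains $T$ only if paid and is otherwise indifferent to the destination of the tokens.

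First I would resolve the challenge phase, where exactly one claimant is drawn (each with probability $\tfrac{1}{2}$) and offered the chance to withdraw. If $A$ is drawn, withdrawing hands $T$ to $B$ and yields $A$ the gross payoff $-\theta$, while asserting sends $T$ to the third party and yields $0$; since $\theta > 0$ this ranking is strict, so $A$ asserts and the tokens are burned. If $B$ is drawn, $B$ receives no payment under either action and is therefore indifferent, but on either branch the tokens go to $A$ or are burned and $B$ collects nothing. The crucial consequence to extract is that, conditional on a dispute, $B$ never obtains $T$ on any branch: whatever the draw and whatever $B$ chooses, $B$'s gross payoff is $0$, so its net payoff is exactly $-f_B$.

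Next I would move to the entry phase. Because front-running leads with certainty to the challenge phase and hence to a net payoff of $-f_B$, while abstaining leaves $A$ as the sole claimant and gives $B$ a payoff of $0$, the inequality $-f_B < 0$ makes abstention a strict best response for $B$. Anticipating that $B$ does not front-run, $A$ is the single message received within $\Delta$ and collects $T - c - f_A > 0$, so performing and broadcasting is optimal (and not performing, yielding $0$, is ruled out under this maintained inequality). Hence the unique subgame perfect equilibrium has a single claimant, namely the performer $A$.

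The step I expect to carry the argument, and the one most in need of care, is the treatment of $B$'s indifference inside the challenge phase. Because $B$ is exactly indifferent there between withdrawing and asserting, one might worry that a front-running equilibrium could be sustained; the resolution, exactly as the arbitrarily small fine does in the Solomonic mechanism of Proposition~1, is that the strictly positive fee $f_B$ breaks the tie at the entry node and delivers \emph{strict} deterrence. I would therefore emphasize that the two hypotheses do distinct work: $\theta > 0$ makes $A$'s assertion credible so that $B$ can never win a contested payment, and $f_B > 0$ converts $B$'s resulting indifference over the destination of $T$ into a strict preference never to enter the dispute in the first place.
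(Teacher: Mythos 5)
Your argument is correct and follows essentially the same backward-induction route as the paper's proof: $\theta>0$ makes $A$ assert in the challenge stage so that $B$ can never win a contested payment, and the sunk transaction fee then makes abstention a strict best response for $B$. Your closing observation about the distinct roles of $\theta>0$ and the fee is exactly the point the paper itself makes in the discussion and remarks following the proposition.
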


\begin{proof}
Without loss in generality, suppose that $A$ is the true claimant and the challenge state is initiated as $B$ also makes a claim. Thus, both agents have incurred the transaction fee, $f$. There are two cases to consider:
\begin{enumerate}
    \item If $A$ is given the opportunity to renounce their claim, they will receive $-\theta$ if they renounce their claim (as they know the other claimant is illegitimate) and $0$ otherwise. Thus, if $\theta > 0$, $A$ will continue to assert their claim and $T$ will be sent to a third party with each agent ultimately earning $-f$.
    \item If $B$ is given the opportunity to renounce their claim, they will receive $0$ regardless (as they know the other agent is legitimate). Thus, they will be indifferent between renouncing or not and their ultimate payoff will be $-f$.
\end{enumerate}
We now examine each agent's incentive to make a claim. As $A$ moves prior to $B$, we work backwards by considering $B$ choice. If $B$ makes a claim (by front-running), they expect to earn $-f$ as $A$ will never renounced their claim if $\theta > 0$. Thus, $B$ will not make a claim. Given that $B$ will not make a claim, $A$ will make a claim and earn $T - f$.
\end{proof}
\\

\noindent The fact that, to make a claim, agents must incur a transaction fee, $f$ makes this mechanism isomorphic to the mechanism considered for Solomon's dilemma where $F = f$. The only difference is that $f$ is not refunded to any claimant. 

Note that the mechanism, like that for Solomon's dilemma, requires that the true claimant have a strict preference regarding whether a payment is made to an illegitimate claimant. Otherwise, if there is a possibility that the legitimate claimant may renounce their claim, this opens up a potential return to illegitimate claimants. A weaker assumption that leads to this same outcome would be that if a true claimant is indifferent as to where the payment is made, if not to themselves, they choose to assert the claim. In equilibrium, if the true claimant were programming in their assertion response at the time they submit $M_A$, then it is optimal for them to assert their claim. Thus, the $\theta > 0$ assumption does not play role if agents precommit their mechanism responses as might arise in a Blockchain environment.

\begin{remark}
The mechanism also yields a single true claimant if there are many potential illegitimate claimants. The only difference is that one claimant out of the pool of claimants is given the opportunity to renounce their claim in the challenge stage. Because the true claimant knows they are the true claimant, they will also choose to assert their claim if $\theta > 0$.\footnote{One possible front-running strategy would be for a front-runner to send multiple messages for the same wallet address. To avoid this, a claim to be resolved would draw based on messages. Of course, front-runners could send messages for different wallets. This, however, would not exclude the true claimant and so would ultimately fail.} 
\end{remark}

\begin{remark}
There is a possibility that there could be multiple claimants with the illegitimate claimants forming a coalition. In this case, a randomly selected agent could chose to withdraw their claim but there still be multiple claimants. In this case, what would happen to the tokens? One way of overcoming this is, when there are more than 2 claimants, a set of agents are randomly selected. If any assert their claims, $T$ is paid to a third party. If all withdraw their claims, those agents are removed from the pool of claimants and a new set of agents (half or just under one half of the remainder) are randomly selected and the mechanism is repeated. Eventually, an agent who is asserting their claim will be selected and $T$ will be paid to a third party. There is guaranteed to be one such agent as the true claimant is amongst the starting pool.\footnote{If there were concerns that this process may take time, then a cost could be imposed on each round of participation.} 
\end{remark}

\begin{remark}
What if, due to network issues, the true claimant is not amongst the pool of claimants when the mechanism begins? If illegitimate claimants believe that this is a possibility, they have an incentive to make such claims. Clearly, if they are the only claimant, they will be able to capture $T$. If they are amongst multiple claimants, this is not possible unless, those multiple claimants are controlled by them. Thus, there would have to be some collusive mechanism amongst illegitimate claimants to subvert the mechanism in this way. So long as the probability that the legitimate claimant is not in the relevant pool at the time the mechanism is run is low enough, the deterrence effect of the mechanism remains.
\end{remark}

\begin{remark}
Since \cite{aghion2012subgame} it has been understood that certain mechanisms may not be robust to small perturbations from common knowledge. \cite{chen2018getting} show how suitably-designed lotteries can ensure that the mechanism used in this paper is robust to private-value perturbations. It would be straightforward to do so in this environment if desired, although it would make the mechanism slightly more involved.
\end{remark}

\begin{remark}
We could have specified a form of “mutually assured destruction” by automatically burning the tokens if there is more than one claimant. This is a reasonable approach, but our challenge stage permits the construction of lotteries mentioned in Remark 4 that make it the \textit{unique} optimal choice for an illegitimate claimant to withdraw their claim (rather than being indifferent). A practical feature of our mechanism is that an illegitimate claimant must pay an additional transaction fee to send the message at the challenge stage, as this must be written to the blockchain. Such a fee also breaks their indifference, while the true claimant has preferences that make them willing to pay a certain fee to assert at the challenge stage and ensure that an illegitimate claimant does not receive the tokens.
\end{remark}

\subsection{The multiple legitimate claimant case}

The above analysis envisages a contract on the blockchain where there is only one agent who can be the legitimate claimant. However, in some applications -- say involving bounties or rewards for performance -- can have multiple legitimate claimants. In the absence of front-runners, such rewards would be made based on some time verification of the messages from agents. That may result in multiple claimants but the contract could specify a tie-breaking rule or another measure to award the bounty including splitting it. When there are illegitimate claimants, however, those rules would create an incentive to front-run the contract. 

A potential solution in this case would be to run the mechanism as proposed for the one legitimate claimant case. This might be done by shortening the time ($\Delta$) where claims will be evaluated even if this results in potentially higher transaction fees. Reducing $\Delta$ means that any true claimant will more likely to believe that no other legitimate claimant will submit a competing claim during that period and there will be one true claimant. In that case, the fact that front-runners do not have an incentive to claim, will preserve the contractual incentives. The cost is that this will limit the tie-breaking options that might otherwise be used in such contracts. Such options are important if they play a role in providing incentives to compete and perform the contract obligations. 

There is, however, a counter-risk that arises in this particular case: the payor may have an incentive to front-run their own mechanism. This would arise if it could not be guaranteed that $T$ was being sent to a party other than the payor. Thus, the mechanism would have to specify that the tokens be burnt or sent to a legitimate charity account that is publicly verified. 

These potential issues, however, need to be weighed against the real possibility that the contract would be otherwise completely unworkable if front-running was possible. Thus, the use of the mechanism expands the feasible contract space but does not obtain the full range of options that would be available if front-running were not possible at all. 

\subsection{A note on implementation}

The mechanism we analyse here can be easily implemented on existing blockchains. Indeed, we have already provided code for a generic smart contract on the Ethereum network.\footnote{See the repository at \url{https://github.com/solomonic-mechanism}.} In effect, it is a Solomonic clause added to existing smart contracts. Figure \ref{fig:chart} shows a flow chart of its operation.

\begin{figure}
    \centering
       \caption{\textbf{Flowchart of Solomonic Clause}}
    \label{fig:chart}
    \includegraphics[scale=0.2]{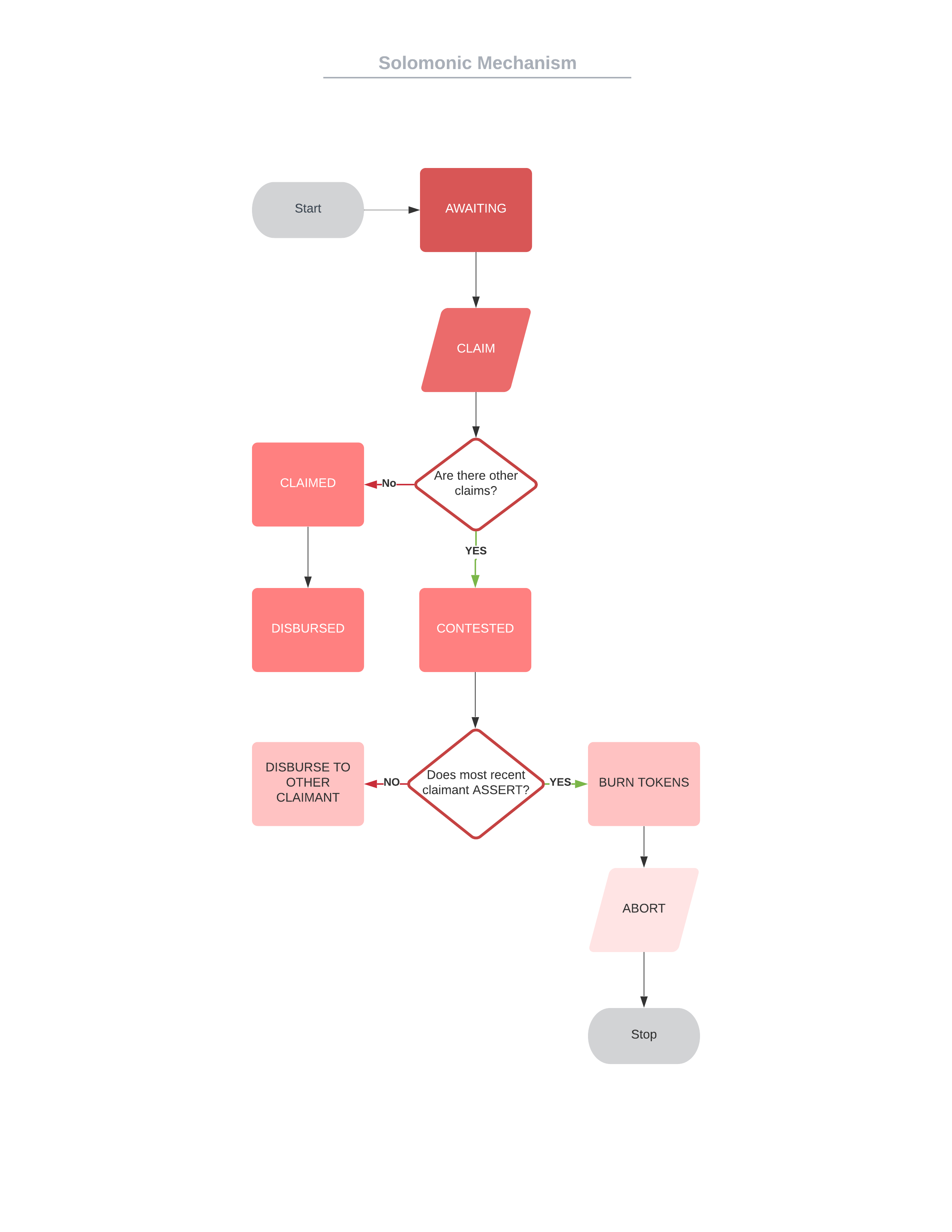}
\end{figure}

There are many open design choices in implementing Solomonic clauses that we list here but that their resolution is beyond the scope of the present paper. These include:
\begin{itemize}
    \item \textit{Hard-coded challenge response}: the mechanism as outlined includes a claim message followed by a message in the challenge stage if a potential claimant is selected. However, it could be envisaged that the initial message contains the contingent response in the challenge stage rather than requiring a separate message and fee payment.
    \item \textit{Randomization}: implementing randomization on a blockchain virtual machine can be challenging and often requires a call to an Oracle that is off-chain. In our implementation example, the agent chosen in the challenge stage was not chosen at random but was the agent with the most recent time-stamp on recorded on the blockchain. Theoretically, this the agent most likely to be the true claimant as they would not be putting forward higher transaction fees as part of a front-running strategy. However, due to latency on the internet, that agent is, in part, determined randomly and thus, this would be a useful alternative to pure randomization. 
    \item \textit{Time period}: in our implementation we set the time period, $\Delta$ to 60 seconds (or 4 blocks on Ethereum). The actual time period chosen would depend on other factors including network congestion and the need to clear token payments quickly or not. 
    \item \textit{Token burning}: If a claim is asserted in a challenge stage, then the tokens need to be transferred away from any party in the arrangement for the mechanism to work. This could involve burning the tokens (sending them to a null address) or, alternatively, having the tokens become part of a fund or non-profit. As the mechanism, if successful, should involve little of this in equilibrium, where the tokens are sent is a decision that should be made to ensure that the mechanism is not attacked by malicious agents trying to force the tokens to be burnt or otherwise undermine the operation of the smart contract.
    \item \textit{Signaling}: A contract with a Solomonic clause could involve a message for payment indistinguishable from contracts without that clause or one that indicated the existence of the Solomonic clause. The distinction would impact on front-running and its attempts. When there is a clear signal, front-runners would avoid these contracts. When there is no clear signal, they may not unless there was a sufficient share of contracts with a Solomonic clause in which case, front-running on all contracts may not be worthwhile. The use of such signals is, therefore, an important implementation choice.
\end{itemize}

\section{Conclusion}

We have outlined a mechanism which addresses front-running of smart contracts. An advantage of our approach is that the mechanism is embedded in a given smart contract, rather than needing to be deployed on the blockchain itself. The code we have provided shows how implement the mechanism in a smart contract, and we have trialed such contracts on the Ethereum blockchain (see \url{https://github.com/solomonic-mechanism} for details.) By removing a major impediment to smart contracting, we hope that such contracts will be able to achieve their potential, including the ability to write renegotiation-proof contracts that are not possible in traditional contracting environments.

Finally, it has not escaped our notice that the type of mechanism utilized here to address front-running can also be used as the cornerstone of a proof-of-stake consensus protocol, thus reducing transaction costs of achieving consensus and avoiding altogether the extreme energy use of proof-of-work protocols.

\newpage

\typeout{}
\bibliography{references}
\bibliographystyle{apalike}

\end{document}